\newtheorem{theo}{Theorem}
\newtheorem{prop}{Proposition}
\newtheorem{cor}{Corollary}
\newtheorem{lem}{Lemma}
\newcommand{\rmT}{{\rm T}}
\newcommand{\BBR}{{\mathbb R}}
\DeclareMathOperator*{\argmin}{arg\,min}
\title{Adaptive Kalman Filtering Developed from Recursive Least Squares Forgetting Algorithms}
\author{\large Brian Lai and Dennis S. Bernstein 
\thanks{Brian Lai and Dennis S. Bernstein are with the Department of Aerospace Engineering, University of Michigan, Ann Arbor, MI, USA. {\tt \{brianlai,  dsbaero\}@umich.edu}. This work was supported by the NSF Graduate Research Fellowship under Grant No. DGE 1841052.}
}
\begin{document}
\maketitle

\begin{abstract}
Recursive least squares (RLS) is derived as the recursive minimizer of the least-squares cost function.
Moreover, it is well known that RLS is a special case of the Kalman filter.
This work presents the Kalman filter least squares (KFLS) cost function, whose recursive minimizer gives the Kalman filter. 
KFLS is an extension of generalized forgetting recursive least squares (GF-RLS), a general framework which contains various extensions of RLS from the literature as special cases.
This then implies that extensions of RLS are also special cases of the Kalman filter.
Motivated by this connection, we propose an algorithm that combines extensions of RLS with the Kalman filter, resulting in a new class of adaptive Kalman filters.
A numerical example shows that one such adaptive Kalman filter provides improved state estimation for a mass-spring-damper with intermittent, unmodeled collisions.
This example suggests that such adaptive Kalman filtering may provide potential benefits for systems with non-classical disturbances.
\end{abstract}

\section{Introduction}
Despite their respective deterministic and stochastic foundations, least-squares and the Kalman filter share an interconnected history \cite{Sorenson1970Least}. 
It is well known that the update equations for recursive least squares (RLS) (e.g. \cite{islam2019recursive}) are the same as those of the Kalman filter with, for all $k \ge 0$, identity state matrix $A_k = I$, zero input matrix $B_k = 0$, process noise covariance $\Sigma_k = 0$, and measurement noise covariance $\Gamma_k = I$ (see p.51 of \cite{aastrom2013adaptive}, section 3.3.5 of \cite{crassidis2004optimal}, or p.129 of \cite{simon2006optimal}). 
As RLS became a foundational algorithm of systems and control theory for online identification of fixed parameters \cite{ljung1983theory,aastrom2013adaptive}, numerous extensions of RLS were developed (e.g. \cite{bruce2020convergence,cao2000directional,kulhavy1984tracking,bittanti1990convergence,
Salgado1988modified,Lai2023Exponential,vahidi2005recursive,
paleologu2008robust,johnstone1982exponential,Dasgupta1987Asymptotically}) to improve identification of time-varying parameters.
However, little work has been done to connect these extensions to the Kalman filter.

The RLS update equations are often derived as the recursive minimizer of a least squares cost function (e.g. \cite{islam2019recursive}).
A natural question is whether the RLS cost function can be generalized to derive the Kalman filter.
While other deterministic derivations of the Kalman filter have been presented (e.g. \cite{Humpherys2010Kalman,humpherys2012fresh}), these do not follow as an extension of the RLS cost.

This work presents the Kalman filter least squares (KFLS) cost function whose recursive minimizer gives the Kalman filter update equations.
KFLS is an extension of generalized forgetting recursive least squares (GF-RLS) \cite{lai2023generalized}, which contains various extensions of RLS from the literature as special cases. 
As a result, these extensions of RLS are also special cases of the Kalman filter with particular choices of the process noise covariance matrix.

This result motivates a new class of adaptive Kalman filtering, with modified prior covariance update equations to incorporate forgetting from extensions of RLS. 
A brief survey is given to show how several forgetting methods from the RLS literature can be applied to adaptive Kalman filtering.
A numerical example shows how adaptive Kalman filtering with robust variable forgetting factor \cite{paleologu2008robust} improves state estimation of a mass-spring-damper system with intermittent, unmodeled collisions.
This example suggests that such an adaptive Kalman filtering may be beneficial when disturbances are non-classical.

\subsubsection{Notation and Terminology}
For symmetric $P,Q \in \BBR^{n \times n}$, $P \prec Q$ (respectively, $P \preceq Q$) denotes that $Q-P$ is positive definite (respectively, positive semidefinite).
For all $x \in \BBR^n$, let $\Vert x \Vert \triangleq \sqrt{x^\rmT x}$. 
For $x \in \BBR^n$ and positive-semidefinite $R \in \BBR^{n \times n}$, $\Vert x \Vert_R \triangleq \sqrt{x^\rmT R x}$. 

\section{Background Material}

\subsection{The one-step Kalman Filter}

Consider the discrete-time, linear, time-varying system
\begin{align}
    x_{k+1} &= A_k x_k + B_k u_k + w_k, \label{eqn: KF state} \\
    y_k &= C_k x_k + v_k, \label{eqn: KF measurement}
\end{align}
where, for all $k \ge 0$, $x_k \in \BBR^n$ is the state, $y_k \in \BBR^{p}$ is the measurement, $u_k \in \BBR^m$ is the input, $w_k \sim \mathcal{N}(0,\Sigma_k)$ is the process noise, and $v_k \sim \mathcal{N}(0,\Gamma_k)$ is the measurement noise, for positive-semidefinite $\Sigma_k \in \BBR^{n \times n}$ and positive-semidefinite $\Gamma_k \in \BBR^{p \times p}$. 
Moreover, for all $k \ge 0$, $A_k \in \BBR^{n \times n}$, $B_k \in \BBR^{n \times m}$, and $C_k \in \BBR^{p \times n}$. 
The \textit{two-step Kalman filter} \cite{Kalman1960Filter} for the system given by \eqref{eqn: KF state} and \eqref{eqn: KF measurement} is can be expressed as
\begin{align}
    \hat{x}_{k+1 \vert k} &= A_k \hat{x}_k + B_k u_k, \label{eqn: KF x predict} \\
    P_{k+1 \vert k} &= A_k P_k A_k^\rmT + \Sigma_k, \label{eqn: KF P predict} \\
     K_k &= P_{k+1 \vert k} C_k^\rmT (C_k P_{k+1 \vert k} C_k^\rmT + \Gamma_k)^{-1} \\
    \hat{x}_{k+1} &= \hat{x}_{k+1 \vert k} + K_k (y_k - C_k \hat{x}_{k+1 \vert k}), \label{eqn: KF x update} \\
    P_{k+1} &= P_{k+1 \vert k} - K_k C_k P_{k+1 \vert k}, \label{eqn: KF P update}
\end{align}
where, for all $k \ge 0$, positive-definite  $P_{k+1 \vert k} \in \BBR^{n \times n}$ and $P_k \in \BBR^{n \times n}$ are, respectively, the prior and posterior covariances, $\hat{x}_{k + 1 \vert k} \in \BBR^n$ and $\hat{x}_k \in \BBR^n$ are, respectively, the prior and posterior state estimates, and $K_k \in \BBR^{n \times p}$ is the Kalman gain.
%

Next, if, for all $k \ge 0$, $\Sigma_k + A_k P_k A_k^\rmT$ and $\Gamma_k$ are nonsingular, then the matrix inversion lemma (Lemma \ref{lem: matrix inversion lemma}) can be used to rewrite \eqref{eqn: KF x predict} through \eqref{eqn: KF P update} as the \textit{one-step Kalman filter} \cite{humpherys2012fresh}, where, for all $k \ge 0$, 
\begin{align}
    P_{k+1}^{-1} =&  (\Sigma_k + A_k P_k A_k^\rmT)^{-1}  + C_k^\rmT \Gamma_k^{-1} C_k, \label{eqn: one step KF 1} \\
    \hat{x}_{k+1} =& A_k \hat{x}_k + B_k u_k  \nonumber
    \\
    & + P_{k+1} C_k^\rmT \Gamma_k^{-1} \left(y_k - C_k (A_k \hat{x}_k + B_k u_k) \right). \label{eqn: one step KF 2}
\end{align}

\subsection{Discrete-Time LTV State Transition Function}
To facilitate expressing a least squares cost function for the Kalman filter, we first introduce the \textit{state transition function} for discrete-time LTV system, a concise notation to transition between states at different time steps. 
For all $k \ge 0$, let $A_k \in \BBR^{n \times n}$, $B_k \in \BBR^{n \times m}$, $x_k \in \BBR^{n}$, and $u_k \in \BBR^{m}$. Consider the discrete-time, linear time-varying state update equation
\begin{align}
    x_{k+1} = A_k x_k + B_k u_k. \label{eqn: LTV state update}
\end{align}
Assume, for all $k \ge 0$, $A_k$ is nonsingular. For all $i,k \ge 0$, define the \textit{state transition matrix} (from step $i$ to step $k$), denoted $\Phi_{k,i} \in \BBR^{n \times n}$, by
\begin{align}
    \Phi_{k,i} \triangleq \begin{cases}
    A_{k-1} A_{k-2} \cdots A_{i+1} A_i & i < k, \\
    I & i = k, \\
    A_{k}^{-1} A_{k+1}^{-1} \cdots A_{i-2}^{-1} A_{i-1}^{-1} & k < i.
    \end{cases}
\end{align}
It follows that, for all $i,k \ge 0$, $\Phi_{i,k}^{-1} = \Phi_{k,i}$. 
Then, for all $0 \le i < k$, $x_k$ can be expressed as
\begin{align}
    x_k &= A_{k-1}x_{k-1} + B_{k-1} u_{k-1}, \nonumber \\
    &= A_{k-1} A_{k-2} x_{k-2} + A_{k-1}B_{k-2} u_{k-2} + B_{k-1} u_{k-1}, \nonumber \\
    &= \ \cdots \ = \Phi_{k,i} x_i + \sum_{j=i}^{k-1} \Phi_{k,j+1} B_j u_j. \label{eqn: x_k forward transition}
\end{align}
For all $0 \le i < k$, we further define the matrices
\begin{align}
    \mathcal{B}_{k,i} &\triangleq \begin{bmatrix} 
        \Phi_{k,{i+1}} B_i & 
        \cdots & \Phi_{k,{k-1}} B_{k-2} & \Phi_{k,{k}} B_{k-1}
    \end{bmatrix}, 
    \\
    \mathcal{U}_{k,i} &\triangleq \begin{bmatrix}
        u_i^\rmT & 
        \cdots & u_{k-2}^\rmT & u_{k-1}^\rmT
    \end{bmatrix}^\rmT.
\end{align}
It follows that $\mathcal{B}_{k,i} \mathcal{U}_{k,i} = \sum_{j=i}^{k-1} \Phi_{k,j+1} B_j u_j$.
Hence, for all $0 \le i < k$, $x_{k} = \Phi_{k,i} x_i + \mathcal{B}_{k,i} \mathcal{U}_{k,i}$.
On the other hand, for all $0 \le k < i$, $x_k = \Phi_{k,i} (x_{i} - \mathcal{B}_{i,k} \mathcal{U}_{i,k} )$.
Therefore, for all $i \ge 0$ and $k \ge 0$, we define the \textit{state transition function} (from step $i$ to step $k$), denoted $\mathcal{T}_{k,i}: \BBR^n \rightarrow \BBR^n$, by
\begin{align}
    \mathcal{T}_{k,i}(x) \triangleq \begin{cases}
        \Phi_{k,i} x + \mathcal{B}_{k,i} \mathcal{U}_{k,i} 
        & i < k, \\
        x & i = k, \\
        \Phi_{k,i} (x  - \mathcal{B}_{i,k} \mathcal{U}_{i,k} ) 
        & k < i.
    \end{cases}
    \label{eqn: state transition function}
\end{align}

\section{A Least Squares Cost Function which Derives the Kalman Filter}
\label{sec: KFRLS}

This section develops the Kalman filter least squares (KFLS) cost function whose recursive minimizer gives the one-step Kalman filter update equations.
To begin, for all $k \ge 0$, let $F_k \in \BBR^{n \times n}$ be the \textit{forgetting matrix}.
Theorem \ref{theo: KFRLS} develops the KFLS cost function \eqref{eqn: KFRLS cost} in terms of $F_k$ and shows how the update equations \eqref{eqn: KFRLS Pinv Update} and \eqref{eqn: KFRLS x Update} minimize that cost.
Corollary \ref{cor: KFRLS and Kalman Filter} will later show how, for a particular choice of $F_k$, the update equations of Theorem \ref{theo: KFRLS} are equivalent to the one-step Kalman filter.

\begin{theo}
\label{theo: KFRLS}
For all $k \ge 0$, let $A_k \in \BBR^{n \times n}$ be nonsingular, $B_k \in \BBR^{n \times m}$, $C_k \in \BBR^{p \times n}$, $\Gamma_k \in \BBR^{p \times p}$ be positive definite, $u_k \in \BBR^{m}$, and $y_k \in \BBR^{p}$. Furthermore, let $P_0 \in \BBR^{n \times n}$ be positive definite and $\hat{x}_0 \in \BBR^{n}$. For all $k \ge 0$, Let $F_k \in \BBR^{n \times n}$ be positive semidefinite and such that
\begin{align}
    F_k & \prec \Phi_{0,k}^\rmT P_0^{-1} \Phi_{0,k} \nonumber 
    \\ 
    & + \sum_{i=0}^{k-1} \left( \Phi_{i+1,k}^\rmT C_i^\rmT \Gamma_i^{-1} C_i \Phi_{i+1,k} - \Phi_{i,k}^\rmT F_i \Phi_{i,k} \right).
    \label{eqn: Pinv - F is pos def}
\end{align}
For all $k \ge 0$, define $J_k \colon \BBR^n \rightarrow \BBR$ as
\begin{equation}
    \label{eqn: KFRLS cost}
    \begin{aligned}
        J_{k}(\hat{x}) \triangleq J_{k,{\rm loss}}(\hat{x}) - J_{k,{\rm forget}}(\hat{x}) + J_{k,{\rm reg}}(\hat{x}),
    \end{aligned}
\end{equation}
where
\begin{align}
    J_{k,{\rm loss}}(\hat{x}) &\triangleq \sum_{i=0}^{k} \Vert y_i - C_i \mathcal{T}_{i+1,k+1}(\hat{x}) \Vert _{\Gamma_i^{-1}}^2 \\ 
    J_{k,{\rm forget}}(\hat{x}) &\triangleq \sum_{i=0}^{k} \Vert \mathcal{T}_{i,k+1}(\hat{x}) - \hat{x}_i \Vert_{ F_i}^2 , \\
    J_{k,{\rm reg}}(\hat{x}) &\triangleq  \Vert \mathcal{T}_{0,k+1}(\hat{x}) - \hat{x}_0 \Vert_{P_0^{-1}}^2.
\end{align}
Then, there exists a unique global minimizer of $J_k(\hat{x})$, denoted $\hat{x}_{k+1} \triangleq \argmin_{\hat{x} \in \BBR^n} J_k(\hat{x})$,
which, for all $k \ge 0$, is given recursively by
\begin{align}
    P_{k+1}^{-1} =& A_k^{-\rmT} ( P_k^{-1} - F_k) A_k^{-1}  + C_k^\rmT \Gamma_k^{-1} C_k, \label{eqn: KFRLS Pinv Update}
    \\
    \hat{x}_{k+1} =& A_k \hat{x}_k + B_k u_k \nonumber
    \\
    &+ P_{k+1} C_k^\rmT \Gamma_k^{-1} \left(y_k - C_k (A_k \hat{x}_k + B_k u_k) \right), \label{eqn: KFRLS x Update}
\end{align}
where, for all $k \ge 0$, $P_k \in \BBR^{n \times n}$ is positive definite.
\end{theo}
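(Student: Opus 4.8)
The plan is to exploit that, by \eqref{eqn: state transition function}, every state transition function $\hat{x} \mapsto \mathcal{T}_{i,k+1}(\hat{x})$ is affine in $\hat{x}$; hence each summand of \eqref{eqn: KFRLS cost} is a squared affine function of $\hat{x}$ and $J_k$ is a quadratic. When its quadratic part is positive definite, the unique global minimizer is found by completing the square, so I would prove the whole statement by induction on $k$, carrying the hypothesis that $J_{k-1}(z) = \Vert z - \hat{x}_k\Vert_{P_k^{-1}}^2 + c_{k-1}$ for some $c_{k-1} \in \BBR$ independent of $z$ and some positive definite $P_k$.

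The crux is a recursive decomposition of the cost. Set $z \triangleq \mathcal{T}_{k,k+1}(\hat{x}) = A_k^{-1}(\hat{x} - B_k u_k)$, the back-propagated state at step $k$. Using $\Phi_{i,k+1} = \Phi_{i,k}A_k^{-1}$ together with the definitions of $\mathcal{B}_{k,i}$ and $\mathcal{U}_{k,i}$, one verifies the semigroup identity $\mathcal{T}_{i,k+1}(\hat{x}) = \mathcal{T}_{i,k}(z)$ for every $i \le k$. Peeling the $i=k$ terms off $J_{k,{\rm loss}}$ and $J_{k,{\rm forget}}$ --- where $\mathcal{T}_{k+1,k+1}(\hat{x}) = \hat{x}$ and $\mathcal{T}_{k,k+1}(\hat{x}) = z$ --- and folding the remaining $i \le k-1$ terms and the regularization term back into $J_{k-1}$, I would obtain
\begin{align}
J_k(\hat{x}) = J_{k-1}(z) + \Vert y_k - C_k \hat{x}\Vert_{\Gamma_k^{-1}}^2 - \Vert z - \hat{x}_k\Vert_{F_k}^2, \nonumber
\end{align}
which holds for all $k \ge 0$ upon initializing $J_{-1}(z) \triangleq \Vert z - \hat{x}_0\Vert_{P_0^{-1}}^2$, consistent with $J_{k,{\rm reg}}$.

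Substituting the inductive form of $J_{k-1}$ and using $z - \hat{x}_k = A_k^{-1}(\hat{x} - \hat{x}_{k+1\vert k})$ with $\hat{x}_{k+1\vert k} \triangleq A_k \hat{x}_k + B_k u_k$, the first and third terms of the decomposition merge into $\Vert \hat{x} - \hat{x}_{k+1\vert k}\Vert_{\bar{P}^{-1}}^2$, where $\bar{P}^{-1} \triangleq A_k^{-\rmT}(P_k^{-1} - F_k)A_k^{-1}$ is the prior inverse covariance. Adding the measurement term $\Vert y_k - C_k\hat{x}\Vert_{\Gamma_k^{-1}}^2$ and completing the square in $\hat{x}$ produces a quadratic with Hessian $2(\bar{P}^{-1} + C_k^\rmT \Gamma_k^{-1}C_k)$, which I would name $2P_{k+1}^{-1}$, yielding \eqref{eqn: KFRLS Pinv Update}. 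Reading off the stationary point and substituting $\bar{P}^{-1} = P_{k+1}^{-1} - C_k^\rmT \Gamma_k^{-1}C_k$ collapses the minimizer to the innovation form \eqref{eqn: KFRLS x Update}; this simultaneously re-establishes the hypothesis $J_k(\hat{x}) = \Vert \hat{x} - \hat{x}_{k+1}\Vert_{P_{k+1}^{-1}}^2 + c_k$ at the next index. The base case $k=0$ follows from the same computation with $J_{-1}$ as defined above.

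It remains to secure positive definiteness, which is where \eqref{eqn: Pinv - F is pos def} enters. A short auxiliary induction on \eqref{eqn: KFRLS Pinv Update}, again using $\Phi_{j,k+1} = \Phi_{j,k}A_k^{-1}$ to factor out $A_k^{-\rmT}(\cdot)A_k^{-1}$, shows that the recursively defined $P_k^{-1}$ equals the batch expression on the right-hand side of \eqref{eqn: Pinv - F is pos def}; thus \eqref{eqn: Pinv - F is pos def} is precisely the statement $P_k^{-1} - F_k \succ 0$. This makes $\bar{P}^{-1} \succ 0$, hence $P_{k+1}^{-1} = \bar{P}^{-1} + C_k^\rmT \Gamma_k^{-1}C_k \succ 0$ and $P_{k+1}$ positive definite, so the Hessian $2P_{k+1}^{-1}$ is positive definite and the global minimizer is unique at every step, closing the induction. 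I expect the main obstacle to be the bookkeeping in the decomposition --- correctly isolating the boundary $i=k$ terms and checking the semigroup identity for the affine (input) parts of $\mathcal{T}$ --- and, in tandem, recognizing that the hypothesis \eqref{eqn: Pinv - F is pos def} is nothing more than the positivity $P_k^{-1} - F_k \succ 0$ once the batch form of $P_k^{-1}$ is identified.
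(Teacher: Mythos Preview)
Your proposal is correct and, in fact, organizes the argument more cleanly than the paper does. The paper expands $J_k(\hat{x})$ directly into $\hat{x}^\rmT H_k \hat{x} + 2 b_k^\rmT \hat{x} + c_k$ with explicit batch formulas for $H_k$ and $b_k$, and then labors to derive recursions $H_k = A_k^{-\rmT}(H_{k-1}-F_k)A_k^{-1} + C_k^\rmT \Gamma_k^{-1} C_k$ and an analogous one for $b_k$; the $b_k$ recursion in particular requires splitting into several pieces ($b_{k,1,1}$, $b_{k,1,2}$, $b_{k,2,1}$, $b_{k,2,2}$, $b_{k,3}$) and invoking the identities \eqref{eqn: kfrls proof identity 1}--\eqref{eqn: kfrls proof identity 2} for the affine parts of $\mathcal{T}$. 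Your route sidesteps all of that by carrying the completed-square form $J_{k-1}(z)=\Vert z-\hat{x}_k\Vert_{P_k^{-1}}^2+c_{k-1}$ as the induction hypothesis and using the single semigroup identity $\mathcal{T}_{i,k+1}(\hat{x})=\mathcal{T}_{i,k}(z)$ to obtain the function-level recursion $J_k(\hat{x})=J_{k-1}(z)+\Vert y_k-C_k\hat{x}\Vert_{\Gamma_k^{-1}}^2-\Vert z-\hat{x}_k\Vert_{F_k}^2$; one completion of the square then yields \eqref{eqn: KFRLS Pinv Update}, \eqref{eqn: KFRLS x Update}, and the next induction hypothesis simultaneously. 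Both proofs handle positive definiteness the same way---identifying the right-hand side of \eqref{eqn: Pinv - F is pos def} with the batch form of $P_k^{-1}$ so that the hypothesis reads $P_k^{-1}-F_k\succ 0$---and both ultimately perform the same algebra to reach the innovation form; the difference is that your packaging avoids ever writing $b_k$ explicitly, trading the paper's long symbol-pushing for a short structural induction.
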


\begin{proof}
    See the Appendix.
\end{proof}

\begin{cor}
\label{cor: Pkinv - Fk is pos def KFRLS}
    Consider the notation and assumptions of Theorem \ref{theo: KFRLS}. 
    For all $k \ge 0$, 
    \begin{align}
        P_k^{-1} &= \Phi_{0,k}^\rmT P_0^{-1} \Phi_{0,k} \nonumber 
    \\ 
    & + \sum_{i=0}^{k-1} \left( \Phi_{i+1,k}^\rmT C_i^\rmT \Gamma_i^{-1} C_i \Phi_{i+1,k} - \Phi_{i,k}^\rmT F_i \Phi_{i,k} \right), \label{eqn: Pkinv expanded}
    \end{align}
    and hence \eqref{eqn: Pinv - F is pos def} hold if and only if $P_k^{-1} - F_k \succ 0$.
\end{cor}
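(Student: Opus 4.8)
The plan is to establish the closed form \eqref{eqn: Pkinv expanded} for $P_k^{-1}$ by induction on $k$, using as the sole engine the recursion \eqref{eqn: KFRLS Pinv Update} already proved in Theorem \ref{theo: KFRLS}; the ``if and only if'' claim will then follow by inspection. For the base case $k=0$, the sum in \eqref{eqn: Pkinv expanded} is empty and $\Phi_{0,0}=I$, so its right-hand side collapses to $P_0^{-1}$, matching the initialization. Note that positive definiteness (hence invertibility) of every $P_k$ is guaranteed by Theorem \ref{theo: KFRLS}, so all the inverses below are well defined.

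For the inductive step, assume \eqref{eqn: Pkinv expanded} holds at step $k$. The first move is to fold the stray $-F_k$ into the forgetting sum: since $\Phi_{k,k}=I$ we have $F_k=\Phi_{k,k}^\rmT F_k \Phi_{k,k}$, so that
\begin{align*}
    P_k^{-1}-F_k &= \Phi_{0,k}^\rmT P_0^{-1}\Phi_{0,k} + \sum_{i=0}^{k-1}\Phi_{i+1,k}^\rmT C_i^\rmT \Gamma_i^{-1} C_i \Phi_{i+1,k} \\
    &\quad - \sum_{i=0}^{k}\Phi_{i,k}^\rmT F_i \Phi_{i,k}.
\end{align*}
I would then substitute this expression into \eqref{eqn: KFRLS Pinv Update}, at which point the entire computation reduces to reindexing the transition matrices under conjugation by $A_k^{-1}$.

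The crux, and the only place requiring any care, is the composition identity $\Phi_{j,k}A_k^{-1}=\Phi_{j,k+1}$, which I would verify directly from the definition of $\Phi$ by writing $A_k^{-1}=\Phi_{k,k+1}$ and telescoping the product (this is routine, and only the cases $0\le j\le k$ actually arise). Applying it to each factor sends $\Phi_{0,k}A_k^{-1}\mapsto\Phi_{0,k+1}$, $\Phi_{i+1,k}A_k^{-1}\mapsto\Phi_{i+1,k+1}$, and $\Phi_{i,k}A_k^{-1}\mapsto\Phi_{i,k+1}$, yielding
\begin{align*}
    A_k^{-\rmT}(P_k^{-1}-F_k)A_k^{-1} &= \Phi_{0,k+1}^\rmT P_0^{-1}\Phi_{0,k+1} \\
    &\quad + \sum_{i=0}^{k-1}\Phi_{i+1,k+1}^\rmT C_i^\rmT \Gamma_i^{-1} C_i \Phi_{i+1,k+1} \\
    &\quad - \sum_{i=0}^{k}\Phi_{i,k+1}^\rmT F_i \Phi_{i,k+1}.
\end{align*}
Adding the remaining term $C_k^\rmT\Gamma_k^{-1}C_k$ from \eqref{eqn: KFRLS Pinv Update} and using $\Phi_{k+1,k+1}=I$ to rewrite it as the $i=k$ summand $\Phi_{k+1,k+1}^\rmT C_k^\rmT\Gamma_k^{-1}C_k\Phi_{k+1,k+1}$ extends the measurement sum's upper limit to $k$, producing exactly \eqref{eqn: Pkinv expanded} with $k$ replaced by $k+1$ and closing the induction.

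Finally, with \eqref{eqn: Pkinv expanded} established, the right-hand side of \eqref{eqn: Pinv - F is pos def} is literally $P_k^{-1}$, so that condition reads $F_k\prec P_k^{-1}$, which is by definition equivalent to $P_k^{-1}-F_k\succ0$. I do not anticipate any genuine obstacle here: the whole argument is bookkeeping organized around the single composition identity for $\Phi$, and the equivalence at the end is immediate once the closed form is in hand.
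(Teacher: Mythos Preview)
Your proposal is correct and follows essentially the same approach as the paper: the paper's proof simply states that \eqref{eqn: Pkinv expanded} ``follows from repeated substitution of \eqref{eqn: KFRLS Pinv Update}'' and then observes that the right-hand side of \eqref{eqn: Pkinv expanded} coincides with that of \eqref{eqn: Pinv - F is pos def}. Your induction is just the careful unpacking of that one-line argument, with the composition identity $\Phi_{j,k}A_k^{-1}=\Phi_{j,k+1}$ made explicit.
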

\begin{proof}
    Let $k \ge 0$. Note that \eqref{eqn: Pkinv expanded} follows from repeated substitution of \eqref{eqn: KFRLS Pinv Update}. 
    Next, note that the right-hand side of \eqref{eqn: Pkinv expanded} is equivalent to the right-hand side of \eqref{eqn: Pinv - F is pos def}. 
    Hence, \eqref{eqn: Pinv - F is pos def} is equivalent to $P_k^{-1} - F_k \succ 0$
\end{proof}

\begin{cor}
\label{cor: KFRLS and Kalman Filter}
    Consider the notation and assumptions of Theorem \ref{theo: KFRLS}. 
    If, for all $k \ge 0$, there exists positive-semidefinite $\Sigma_k \in \BBR^{n \times n}$ such that
    \begin{align}
    \label{eqn: F_k in terms of Sigma_k}
    F_k = P_k^{-1} - (A_k^{-1} \Sigma_k A_k^{-\rmT} + P_k)^{-1},
    \end{align}
    then, for all $k \ge 0$, \eqref{eqn: Pinv - F is pos def} is satisfied. Moreover, \eqref{eqn: KFRLS Pinv Update} and \eqref{eqn: KFRLS x Update} are equivalent to the one-step Kalman filter update equations \eqref{eqn: one step KF 1} and \eqref{eqn: one step KF 2}.
\end{cor}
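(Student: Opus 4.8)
The plan is to establish the two assertions in turn: the positivity condition \eqref{eqn: Pinv - F is pos def} follows from Corollary~\ref{cor: Pkinv - Fk is pos def KFRLS} together with the defining relation \eqref{eqn: F_k in terms of Sigma_k}, while the equivalence of the update equations reduces to a single matrix factorization.

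First I would verify \eqref{eqn: Pinv - F is pos def}. By Corollary~\ref{cor: Pkinv - Fk is pos def KFRLS} this condition is equivalent to $P_k^{-1} - F_k \succ 0$, so it suffices to inspect $P_k^{-1} - F_k$. Substituting \eqref{eqn: F_k in terms of Sigma_k} yields the clean identity
\begin{align}
    P_k^{-1} - F_k = (A_k^{-1} \Sigma_k A_k^{-\rmT} + P_k)^{-1}. \nonumber
\end{align}
Since $\Sigma_k \succeq 0$ and $A_k$ is nonsingular, $A_k^{-1} \Sigma_k A_k^{-\rmT} \succeq 0$, and since $P_k \succ 0$ the sum $A_k^{-1} \Sigma_k A_k^{-\rmT} + P_k$ is positive definite; hence its inverse is positive definite, giving $P_k^{-1} - F_k \succ 0$. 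As a by-product, $A_k^{-1} \Sigma_k A_k^{-\rmT} + P_k \succeq P_k$ gives $(A_k^{-1} \Sigma_k A_k^{-\rmT} + P_k)^{-1} \preceq P_k^{-1}$, confirming that $F_k \succeq 0$ as the theorem requires. This argument is naturally inductive: $P_0 \succ 0$ by hypothesis, and if $P_k \succ 0$ then the condition holds at step $k$, whence \eqref{eqn: KFRLS Pinv Update} makes $P_{k+1}^{-1}$ positive definite, closing the induction.

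Next I would show that the covariance recursion \eqref{eqn: KFRLS Pinv Update} collapses to \eqref{eqn: one step KF 1}. Substituting the same identity into \eqref{eqn: KFRLS Pinv Update} gives
\begin{align}
    P_{k+1}^{-1} = A_k^{-\rmT} (A_k^{-1} \Sigma_k A_k^{-\rmT} + P_k)^{-1} A_k^{-1} + C_k^\rmT \Gamma_k^{-1} C_k. \nonumber
\end{align}
The key algebraic step, and the one place requiring care, is the factorization
\begin{align}
    A_k^{-1} \Sigma_k A_k^{-\rmT} + P_k = A_k^{-1} (\Sigma_k + A_k P_k A_k^\rmT) A_k^{-\rmT}, \nonumber
\end{align}
which I would verify by expanding the right-hand side. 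Inverting this and conjugating by $A_k^{-\rmT}$ and $A_k^{-1}$ telescopes the $A_k$ factors, leaving $A_k^{-\rmT} (A_k^{-1} \Sigma_k A_k^{-\rmT} + P_k)^{-1} A_k^{-1} = (\Sigma_k + A_k P_k A_k^\rmT)^{-1}$, so that $P_{k+1}^{-1} = (\Sigma_k + A_k P_k A_k^\rmT)^{-1} + C_k^\rmT \Gamma_k^{-1} C_k$, which is exactly \eqref{eqn: one step KF 1}.

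Finally, the state recursion \eqref{eqn: KFRLS x Update} is term-for-term identical to \eqref{eqn: one step KF 2}, so once the two covariance recursions agree --- and hence $P_{k+1}$ denotes the same matrix in both --- the state recursions agree automatically, completing the equivalence. The only real obstacle is the factoring trick of the second step; everything else is direct substitution and the elementary monotonicity of the matrix inverse.
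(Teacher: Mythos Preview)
Your proposal is correct and follows essentially the same route as the paper: both reduce \eqref{eqn: Pinv - F is pos def} to $P_k^{-1}-F_k\succ 0$ via Corollary~\ref{cor: Pkinv - Fk is pos def KFRLS}, read off $P_k^{-1}-F_k=(A_k^{-1}\Sigma_k A_k^{-\rmT}+P_k)^{-1}$ from \eqref{eqn: F_k in terms of Sigma_k}, and use the factorization $A_k^{-1}\Sigma_k A_k^{-\rmT}+P_k=A_k^{-1}(\Sigma_k+A_k P_k A_k^\rmT)A_k^{-\rmT}$ to collapse \eqref{eqn: KFRLS Pinv Update} to \eqref{eqn: one step KF 1}. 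Your version is slightly more careful in two places---you correctly treat $\Sigma_k$ as only positive \emph{semi}definite, and you flag the inductive dependence on $P_k\succ 0$---but the argument is otherwise the same.
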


\begin{proof}
    Let $k \ge 0$. Note that \eqref{eqn: F_k in terms of Sigma_k} can be rewritten as
    \begin{align}
    \label{eqn: KFRLS Sigma_k F_k relation}
    (\Sigma_k + A_k P_k A_k^\rmT)^{-1} = A_k^{-\rmT} (P_k^{-1} - F_k) A_k^{-1}.
    \end{align}
    Substituting \eqref{eqn: KFRLS Sigma_k F_k relation} into \eqref{eqn: KFRLS Pinv Update}, it follows that \eqref{eqn: KFRLS Pinv Update} and \eqref{eqn: KFRLS x Update} are equivalent to \eqref{eqn: one step KF 1} and \eqref{eqn: one step KF 2}.
    Moreover, \eqref{eqn: F_k in terms of Sigma_k} can also be expressed as $P_k^{-1} - F_k = (A_k^{-1} \Sigma_k A_k^{-\rmT} + P_k)^{-1}$. 
    Since $\Sigma_k$ and $P_k$ are both positive definite, it follows that $A_k^{-1} \Sigma_k A_k^{-\rmT} + P_k$ is positive definite, and hence $P_k^{-1} - F_k$ is positive definite.
    Therefore, by Corollary \ref{cor: Pkinv - Fk is pos def KFRLS}, condition \eqref{eqn: Pinv - F is pos def} is satisfied.
\end{proof}

Next, Proposition \ref{prop: Sigma_k PSD (PD)} shows that, for all $k \ge 0$, $\Sigma_k$ has the same definiteness as $F_k$.
\begin{prop}
\label{prop: Sigma_k PSD (PD)}
For all $k \ge 0$, $\Sigma_k$ is positive semidefinite (resp. positive definite) if and only on $F_k$ is positive semidefinite (resp. positive definite).
\end{prop}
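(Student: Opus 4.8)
The plan is to prove the claim by establishing the equivalence through the relationship $P_k^{-1} - F_k = (A_k^{-1} \Sigma_k A_k^{-\rmT} + P_k)^{-1}$, which comes directly from rewriting \eqref{eqn: F_k in terms of Sigma_k} (as already noted in the proof of Corollary \ref{cor: KFRLS and Kalman Filter}). The key observation is that definiteness of $F_k$ is controlled by definiteness of $\Sigma_k$ through a monotonicity argument on the matrix inverse.

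First I would start from the rearranged form of \eqref{eqn: F_k in terms of Sigma_k}, namely $F_k = P_k^{-1} - (A_k^{-1} \Sigma_k A_k^{-\rmT} + P_k)^{-1}$. Since $P_k$ is positive definite (by Theorem \ref{theo: KFRLS}), I would use the fact that $A_k^{-1} \Sigma_k A_k^{-\rmT}$ is positive semidefinite (resp.\ positive definite) if and only if $\Sigma_k$ is positive semidefinite (resp.\ positive definite), because $A_k$ is nonsingular and congruence by an invertible matrix preserves definiteness. This reduces the problem to relating the definiteness of $\Sigma_k$ to that of $F_k$.

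The core step is the monotonicity of matrix inversion: for positive-definite matrices $X \preceq Y$, one has $Y^{-1} \preceq X^{-1}$, with strict inequality under strict ordering. I would apply this with $X = P_k$ and $Y = A_k^{-1} \Sigma_k A_k^{-\rmT} + P_k$. When $\Sigma_k$ is positive semidefinite, $A_k^{-1} \Sigma_k A_k^{-\rmT} \succeq 0$ gives $Y \succeq X$, hence $P_k^{-1} - Y^{-1} = F_k \succeq 0$; conversely if $F_k \succeq 0$ then $Y^{-1} \preceq P_k^{-1}$, which forces $Y \succeq P_k$, i.e.\ $A_k^{-1}\Sigma_k A_k^{-\rmT} \succeq 0$, and so $\Sigma_k \succeq 0$. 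The positive-definite case follows identically by replacing $\preceq$ with $\prec$ throughout.

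The main obstacle — really the only subtle point — is to justify the strict version of the inversion monotonicity cleanly, and to confirm that the congruence $\Sigma_k \mapsto A_k^{-1} \Sigma_k A_k^{-\rmT}$ genuinely preserves \emph{strict} definiteness in both directions (which it does precisely because $A_k$ is invertible, so the map is a bijection on the cone of positive-definite matrices). Everything else is a direct two-way implication, so I expect the proof to be short, consisting of the algebraic rearrangement, the congruence-invariance remark, and the application of inversion monotonicity in both the semidefinite and definite cases.
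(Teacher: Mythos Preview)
Your proposal is correct and follows essentially the same approach as the paper: both arguments rest on the identity $F_k = P_k^{-1} - (A_k^{-1}\Sigma_k A_k^{-\rmT} + P_k)^{-1}$ (equivalently \eqref{eqn: Sigma_k definition}), the fact that congruence by the nonsingular $A_k$ preserves (semi)definiteness, and the order-reversing monotonicity of matrix inversion on the positive-definite cone. The only point you leave implicit is that in the converse direction one must know $Y = A_k^{-1}\Sigma_k A_k^{-\rmT} + P_k$ is positive definite before inverting the inequality $Y^{-1}\preceq P_k^{-1}$; this is guaranteed by Corollary~\ref{cor: Pkinv - Fk is pos def KFRLS}, since $Y^{-1} = P_k^{-1}-F_k \succ 0$, and the paper invokes exactly this fact.
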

\begin{proof}
    Let $k \ge 0$. 
    If $F_k$ is positive semidefinite (respectively, positive definite), then $P_k^{-1} -F_k \preceq P_k^{-1}$ (respectively, $P_k^{-1} -F_k \prec P_k^{-1}$). 
    Since $P_k^{-1} - F_k$ is nonsingular by Corollary \ref{cor: Pkinv - Fk is pos def KFRLS}, it follows that $(P_k^{-1} -F_k)^{-1} \succeq P_k$ and hence $(P_k^{-1} -F_k)^{-1} - P_k \succeq 0$ (respectively, $(P_k^{-1} -F_k)^{-1} -  P_k \succ 0$). 
    Finally, since $A_k$ is nonsingular by assumption, it follows from \eqref{eqn: Sigma_k definition} that $\Sigma_k$ is positive semidefinite (respectively, positive definite).
    Next, if $\Sigma_k$ is positive semidefinite (respectively, positive definite), then $\Sigma_k + P_k \succeq P_k$ (respectively, $\Sigma_k + P_k \succ P_k$) and hence $(\Sigma_k + P_k)^{-1} \preceq P_k^{-1}$ (respectively, $(\Sigma_k + P_k)^{-1} \prec P_k^{-1}$). 
    Therefore, $F_k = P_k^{-1} - (\Sigma_k + P_k)^{-1} \succeq 0$ (respectively, $F_k = P_k^{-1} - (\Sigma_k + P_k)^{-1} \succ 0$).
\end{proof}

%

\section{RLS Extensions as Special Cases of the Kalman Filter}
Revisiting the state update equation \eqref{eqn: LTV state update}, note that if, for all $k \ge 0$, $A_k = I_n$ and $B_k = 0_{n \times m}$, then, for all $k \ge 0$, $x_{k+1} = x_k$.
This also implies that the state transition function, given by \eqref{eqn: state transition function}, is identity.
In particular, for all $i \ge 0$, $k \ge 0$, and $x \in \BBR^n$, $\mathcal{T}_{k,i}(x) = x$.
Substituting the identity state transition function into the cost function $J_k$ given by \eqref{eqn: KFRLS cost}, it follows that, for all $k \ge 0$, 
\begin{align}
    J_{k,{\rm loss}}(\hat{x}) &= \sum_{i=0}^{k} \Vert y_i - C_i \hat{x} \Vert _{\Gamma_i^{-1}}^2 \\ 
    J_{k,{\rm forget}}(\hat{x}) &= \sum_{i=0}^{k} \Vert \hat{x} - \hat{x}_i \Vert_{ F_i}^2 , \\
    J_{k,{\rm reg}}(\hat{x}) &=  \Vert \hat{x} - \hat{x}_0 \Vert_{P_0^{-1}}^2.
\end{align}

Note that, in this special case, the cost function $J_k$ is equivalent to the generalized forgetting recursive least squares (GF-RLS) cost developed in \cite{lai2023generalized}, where GF-RLS uses the notation $\theta \in \BBR^n$ and $\phi_k \in \BBR^{p \times n}$ instead of $\hat{x} \in \BBR^n$ and $C_k \in \BBR^{p \times n}$, respectively.
In \cite{lai2023generalized}, it was shown that various extensions of RLS from the literature are special cases of GF-RLS if, for all $k \ge 0$, a particular forgetting matrix $F_k \in \BBR^{n \times n}$ is chosen.

Therefore, we conclude that an extension of RLS, which is a special case of GF-RLS with forgetting matrix $F_k$, $k \ge 0$, is also a special case of the Kalman filter if, for all $k \ge 0$, $A_k = I_n$, $B_k = 0_{n \times m}$, and there exists positive-semidefinite $\Sigma_k \in \BBR^{n \times n}$ such that \eqref{eqn: F_k in terms of Sigma_k} holds.
Explicitly solving for $\Sigma_k$, it follows that
\begin{align}
    \label{eqn: Sigma_k definition}
    \Sigma_k = A_k \left[ (P_k^{-1} - F_k)^{-1} - P_k \right] A_k^\rmT,
\end{align}
where $P_k^{-1} - F_k$ is nonsingular by \eqref{eqn: Pinv - F is pos def} and Corollary \ref{cor: Pkinv - Fk is pos def KFRLS}. 
Moreover, by Proposition \ref{prop: Sigma_k PSD (PD)}, $\Sigma_k$ is positive semidefinite if and only if $F_k$ is positive semidefinite.

While \cite{lai2023generalized} gives a thorough literature review on extensions of RLS as special cases of GF-RLS, for brevitiy, we summarize eight extensions in Table \ref{table:summary}.
Given are the algorithm name and reference, assumptions of the algorithm, tuning parameters, and $\Sigma_k \in \BBR^{n \times n}$ derived from \eqref{eqn: Sigma_k definition}.

\begin{table*}[ht]\centering
\caption{RLS Extensions as Special Cases of the Kalman Filter}
\setlength\tabcolsep{6pt} 
\renewcommand{\arraystretch}{1.1} 
\begin{tabular}{@{}lll@{}}
\toprule[1pt] 
Algorithm & Tuning Parameters & Process Noise Covariance $\Sigma_k$ 
\\
\midrule
1. Recursive Least Squares \cite{islam2019recursive} & --- & $\Sigma_k = 0_{n \times n}$
\\ 
2. Exponential Forgetting \cite{islam2019recursive,goel2020recursive} & $\lambda \in (0,1]$ & $\Sigma_k = (\frac{1}{\lambda} - 1)P_k.$ 
\\
3. Variable-Rate Forgetting \cite{bruce2020convergence} & $\lambda_k \in (0,1]$, $k \ge 0$ & $\Sigma_k = (\frac{1}{\lambda_k} - 1)P_k.$ 
\\
4. Data-Dependent Forgetting \cite{Dasgupta1987Asymptotically} & $ \mu_{-1} = 1$ and $\mu_k \in [0,1), k \ge 0$ & $\Sigma_k = (\frac{\mu_k}{(1-\mu_k)\mu_{k-1}} - 1)P_k.$ 
\\
5. Exponential Resetting \cite{Lai2023Exponential} & positive-definite $P_\infty \in \BBR^{n \times n}$ & $\Sigma_k = (\lambda P_k^{-1} + (1-\lambda) P_\infty^{-1})^{-1} - P_k.$ 
\\
6. Covariance Resetting \cite{goodwin1983deterministic}& positive-definite $P_\infty \in \BBR^{n \times n}$ and resetting criteria 
& $\Sigma_k = \begin{cases}
        P_\infty - P_k & \textnormal{criteria met}, \\
        0_{n \times n} & \textnormal{otherwise}.
    \end{cases}$ 
\\
\begin{tabular}{@{}l@{}} 7. Directional Forgetting by \\ Information Matrix Decomp. \cite{cao2000directional} \end{tabular}  &  $\lambda \in (0,1]$  & $\Sigma_k = \frac{1-\lambda}{\lambda} C_k^\rmT (C_k P_k^{-1} C_k^\rmT)^{-1}  C_k.$ 
\\
8. Variable-Direction Forgetting \cite{goel2020recursive} & Positive-definite $\Lambda_k \in \BBR^{n \times n}$, $k \ge 0$ & $\Sigma_k = \Lambda_k^{-1} P_k^{-1} \Lambda_k^{-1} - P_k.$ 
\\
\bottomrule[1pt]
\end{tabular}
\label{table:summary}
\end{table*}

\section{Adaptive Kalman Filtering Developed from RLS Forgetting Algorithms}
Thus far, we have shown that various extensions of RLS from the literature are special cases of the Kalman filter, in part, by a special choice of the process noise covariance given by \eqref{eqn: Sigma_k definition}.
Motivated by this relationship, we propose a class of adaptive Kalman filters by replacing the prior covariance update equation \eqref{eqn: KF P predict} with the adaptive prior covariance update equations
\begin{align}
    P_{{\rm forget},k} &= P_k + \Sigma_{{\rm forget},k}, \label{eqn: adaptive prior covariance 1} \\
    P_{k+1 \vert k} &= A_k P_{{\rm forget},k} A_k^\rmT + \Sigma_{{\rm Kalman},k}, \label{eqn: adaptive prior covariance 2}
\end{align}
where, for all $k \ge 0$, positive-semidefinite $\Sigma_{{\rm forget},k} \in \BBR^{n \times n}$ is designed from an extension of RLS (e.g. right column of Table \ref{table:summary}), and positive-semidefinite $\Sigma_{{\rm Kalman},k} \in \BBR^{n \times n}$ is designed using traditional methods of the Kalman filter.

While there are as many variants of this algorithm as extensions of RLS, we select a particular extension of RLS to show the potential benefits in state estimation.

\subsection{Kalman Filter with Robust Variable Forgetting Factor}
Consider the mass-spring-damper system in Figure \ref{fig:mass-spring-damper}, with mass $m = 10$, spring constant $k = 5$, and damping coefficient $c = 3$. 
The vertical displacement of the mass at time $t$ is $z(t)$, where $z = 0$ when the mass is at rest.  
Assume that $z(0) = -1$ and $\dot{z}(0) = 1$
Furthermore, a downward force $F(t) = 10 \sin(t)$ is applied on the mass at time $t$.
This nominal mass-spring-damper system can be modeled as 
\begin{align}
\label{eqn: nominal spring mass damper}
    \begin{bmatrix}
        \dot{z}(t) \\
        \ddot{z}(t)
    \end{bmatrix}
    =
    \begin{bmatrix}
        0 & 1 \\ 
        -\nicefrac{k}{m} & -\nicefrac{b}{m}
    \end{bmatrix}
    \begin{bmatrix}
        z(t) \\
        \dot{z}(t)
    \end{bmatrix}
    + 
    \begin{bmatrix}
        0 \\ \nicefrac{1}{m}
    \end{bmatrix}
    F(t).
\end{align}

However, we additionally assume there is a wall at $z=2$. If the mass collides with this wall, the mass reverses direction with the same speed. 
Note that such intermittent collisions can be interpreted as impulsive disturbances on the system. 
Finally, we consider, for all $k \ge 0$, the measurements $y_k \in \BBR$, given by
\begin{align}
    y_k = z(k T_s) + \dot{z}(k T_s) + v_k,
    \label{eqn: measurement spring-mass-damper}
\end{align}
where $T_s = 0.1$ is the sampling time and, for all $k \ge 0$, $v_k \sim \mathcal{N}(0,\Gamma_k)$ is the measurement noise with $\Gamma_k = 0.01$.
The goal is to estimate the vertical displacement $z(t)$ and velocity $\dot{z}(t)$ without knowledge of the wall at $z=2$.
We will assume that the nominal model \eqref{eqn: nominal spring mass damper} and the measurement noise covariance $\Gamma_k$ are known.
\begin{figure}
    \centering
    \includegraphics[width = .18 \textwidth]{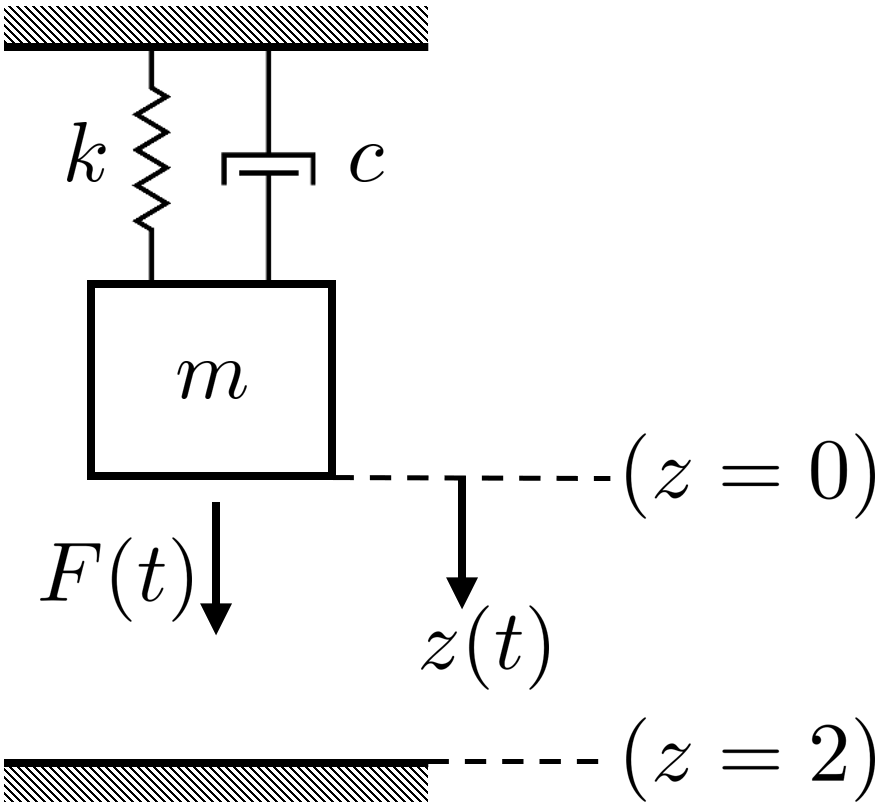}
    \caption{Mass-spring-damper system diagram. The mass can collide with the wall at $z=2$, reversing direction and keeping the same speed.}
    \label{fig:mass-spring-damper}
\end{figure}

We begin by discretizing \eqref{eqn: nominal spring mass damper} and \eqref{eqn: measurement spring-mass-damper} using zero-order hold and sampling time $T_s$ to obtain the nominal system
\begin{align}
    x_{k+1} &= A_k x_k + 
    B_k u_k, \nonumber \\
    y_k &= C_k x_k + v_k,
    \label{eqn: nominal discrete system}
\end{align}
where, for all $k \ge 0$, $x_k \triangleq [z(k T_s) \ \dot{z}(k T_s)  ]^\rmT$, $u_k \triangleq F(k T_s)$,
\begin{align}
    A_k & \triangleq \begin{bmatrix}
        0.9975 & 0.09843\\ -0.04922 & 0.9680
    \end{bmatrix}
    , \
    B_k \triangleq \begin{bmatrix}
        \num{4.948 e -4} \\ \num{9.843 e -3}
    \end{bmatrix}, \nonumber 
    \\
    C_k & \triangleq \begin{bmatrix}
        1 & 1
    \end{bmatrix}.
\end{align}
We first consider state estimation using the Kalman filter with the nominal discrete system \eqref{eqn: nominal discrete system}, and $P_0 = 0.1 I_2$, $\hat{x}_0 = [0 \ 0]^\rmT$, and, for all $k \ge 0$, $\Gamma_k = 0.01$, $\Sigma_k = 0.01 I_2$.

Second, we will consider an adaptive Kalman filter with adaptive prior covariance update equations \eqref{eqn: adaptive prior covariance 1} and \eqref{eqn: adaptive prior covariance 2} developed from variable-rate forgetting \cite{bruce2020convergence}. 
This adaptive Kalman filter also uses the nominal discrete system \eqref{eqn: nominal discrete system}, and $P_0 = 0.1 I_2$, $\hat{x}_0 = [0 \ 0]^\rmT$, and, for all $k \ge 0$, $\Gamma_k = 0.01$.
Then, for all $k \ge 0$, let 
\begin{align}
    \Sigma_{{\rm Kalman},k} = 0.01 I_2, \quad \Sigma_{{\rm forget},k} = (\frac{1}{\lambda_k} - 1) P_k,
\end{align}
where, the forgetting factor $\lambda_k \in (0,1]$ is chosen using the robust variable forgetting factor algorithm developed in \cite{paleologu2008robust}.
We've chosen this algorithm for its ability to improve tracking of impulsive changes of parameters \cite{paleologu2008robust}.
For all $k \ge 0$, let 
\begin{align}
    \hat{\sigma}_{e,k}^2 &= \alpha \hat{\sigma}_{e,k-1}^2 + (1-\alpha) (y_k - C_k \hat{x}_k) ^2 , \\
    \hat{\sigma}_{q,k}^2 &= \alpha \hat{\sigma}_{q,k-1}^2 + (1-\alpha) (\hat{x}_k^\rmT P_k \hat{x}_k) ^2, \\
    \hat{\sigma}_{v,k}^2 &= \beta \hat{\sigma}_{v,k-1}^2 + (1-\beta) (y_k - C_k \hat{x}_k) ^2,
\end{align}
where $\hat{\sigma}_{e,-1} \triangleq \hat{\sigma}_{q,-1} \triangleq \hat{\sigma}_{v,-1} \triangleq 1$, $\alpha \triangleq 1-\frac{1}{K_\alpha n}$, and $\beta \triangleq 1-\frac{1}{K_\beta n}$, where $K_\alpha \triangleq 2$, $K_\beta \triangleq 10$, and $n = 2$ since the system is second-order.

Then, for all $k \ge 0$, if $\hat{\sigma}_{e,k} \le \hat{\sigma}_{v,k}$, $\lambda_k = \lambda_{\rm max}$, otherwise
\begin{align}
    \lambda_k = 
    \max \left\{ \min \left\{ \frac{\hat{\sigma}_{q,k} \hat{\sigma}_{v,k}}{\xi + \vert \hat{\sigma}_{e,k} -\hat{\sigma}_{v,k} \vert } , \lambda_{\rm max} \right\} , \lambda_{\rm min} \right\},
\end{align}
where $\xi \triangleq 10^{-6}$, $\lambda_{\rm min} \triangleq 0.5$, and $\lambda_{\rm max} \triangleq 1$.
For details on robust variable forgetting and tuning of parameters, see \cite{paleologu2008robust}.

Figure \ref{fig: state} shows state estimation of the Kalman filter (KF) and the adaptive Kalman filter (KF*), as well as the forgetting factor $\lambda_k$, all with zero-order hold.
Note that after each of the four collisions the mass makes with the wall at $z=2$, the forgetting factor $\lambda_k$ briefly but drastically decreases. 
This results in improved displacement and velocity estimation immediately after each collision. 
This can be more clearly seen in the error between the true state and the estimated state in Figure \ref{fig: error}.
Figure \ref{fig: covariance} shows $\sigma_{\hat{z}}$ and $\sigma_{\hat{\dot{z}}}$, the diagonal elements of the covariance matrix $P_k$, which can also be interpreted as the marginal variance of the displacement and velocity state estimate, respectively.
Note that in the adaptive Kalman filter (KF*), there is a sudden increase in both $\sigma_{\hat{z}}$ and $\sigma_{\hat{\dot{z}}}$ after each collision, allowing for quicker adaptation.

\begin{figure}[ht]
    \centering
    \includegraphics[width = .48 \textwidth]{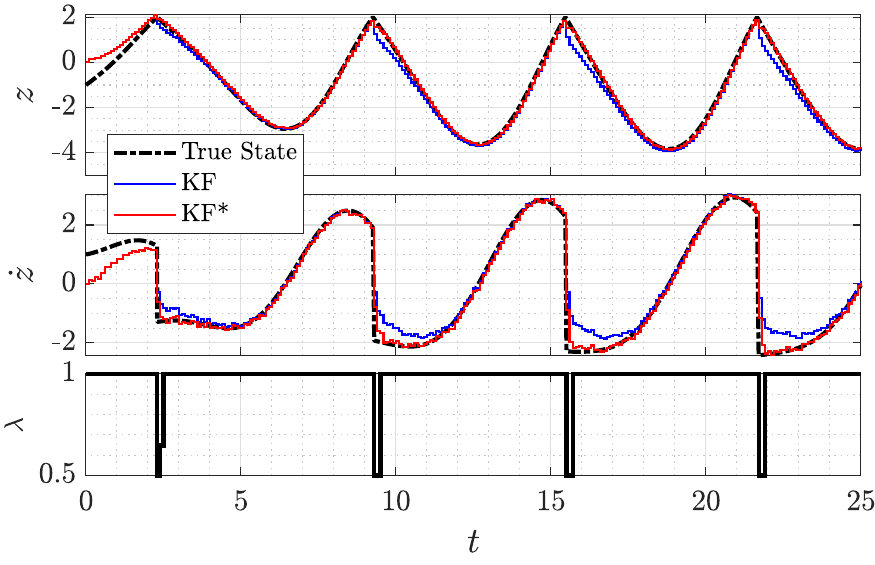}
    \caption{Vertical displacement ($z$) and velocity ($\dot{z}$) estimation using Kalman filter (KF) and adaptive Kalman filter (KF*). $\lambda$ shows the forgetting factor used in KF*.}
    \label{fig: state}
\end{figure}

\begin{figure}[ht]
    \centering
    \includegraphics[width = .48 \textwidth]{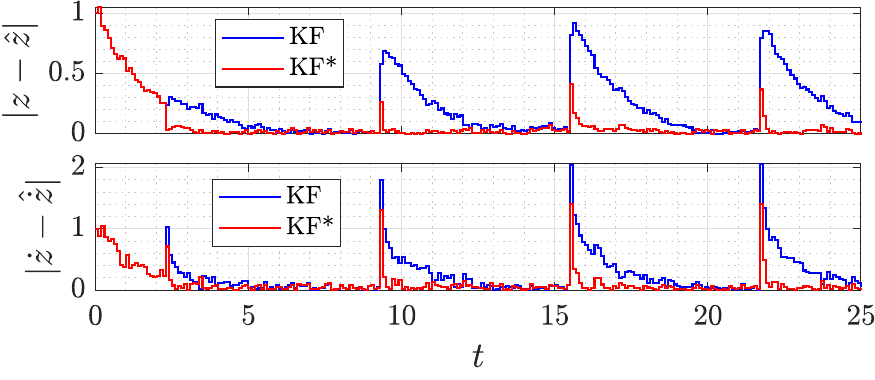}
    \caption{Estimation error of vertical displacement (top) and velocity (bottom) using Kalman filter (KF) and adaptive Kalman filter (KF*).}
    \label{fig: error}
\end{figure}

\begin{figure}[ht]
    \centering
    \includegraphics[width = .48 \textwidth]{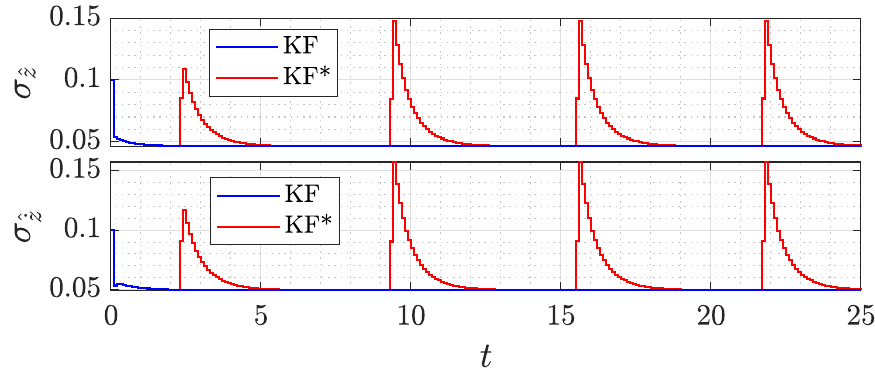}
    \caption{Marginal variance of vertical displacement (top) and velocity (bottom) using Kalman filter (KF) and adaptive Kalman filter (KF*).}
    \label{fig: covariance}
\end{figure}

\section{Conclusion}

This work presents the Kalman filter least squares cost function whose recursive minimizer gives the Kalman filter update equations. 
An important consequence of this cost function is that various extensions of RLS from the literature are special cases of the Kalman filter.
Motivated by this result, we propose a new adaptive Kalman filters, whose prior covariance update is modified to include RLS forgetting.
While the numerical example we presented shows the potential benefits of adaptive Kalman filtering with robust variable forgetting factor \cite{paleologu2008robust} in the presence of impulsive disturbances, there are numerous other forgetting algorithms in the RLS literature to be considered (several summarized in Table \ref{table:summary}).
Future work includes further exploration into how, and in what situations, such extensions may be beneficial.



\bibliographystyle{IEEEtran}
\bibliography{refs}

\begingroup
\allowdisplaybreaks
\section*{Appendix}
\begin{lem}
\label{lem: matrix inversion lemma}
Let $A \in \BBR^{n \times n}$, $U \in \BBR^{n \times p}$, $C \in \BBR^{p \times p}$, $V \in \BBR^{p \times n}$. Assume $A$, $C$, and $A+UCV$ are nonsingular. Then, $(A+UCV)^{-1} = A^{-1} - A^{-1}U(C^{-1} + VA^{-1} U)^{-1} V A^{-1}$.
\end{lem}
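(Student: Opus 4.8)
The plan is to prove the identity by direct verification: I will show that the proposed expression is a right inverse of $A+UCV$, which suffices since $A+UCV$ is square and nonsingular. Writing $M \triangleq A^{-1} - A^{-1}U(C^{-1}+VA^{-1}U)^{-1}VA^{-1}$ for the claimed inverse, the goal is to establish $(A+UCV)M = I$.

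Before computing, I would confirm that $M$ is well-defined, i.e. that $C^{-1}+VA^{-1}U$ is nonsingular. This does not appear explicitly among the stated hypotheses, so I would derive it from them. Consider the block matrix $\left[\begin{smallmatrix} A & -U \\ V & C^{-1}\end{smallmatrix}\right]$ and compute its determinant by Schur complement in two ways: eliminating the $(1,1)$ block gives $\det(A)\det(C^{-1}+VA^{-1}U)$, while eliminating the $(2,2)$ block gives $\det(C^{-1})\det(A+UCV)$. Equating the two yields $\det(C^{-1}+VA^{-1}U) = \det(A+UCV)/(\det A \det C)$, whose right-hand side is nonzero because $A$, $C$, and $A+UCV$ are nonsingular. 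Hence $C^{-1}+VA^{-1}U$ is invertible and $M$ is well-defined.

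The main computation is then routine. Expanding $(A+UCV)M$, the first term gives $(A+UCV)A^{-1} = I + UCVA^{-1}$. For the second term, the key algebraic step is to factor $U + UCVA^{-1}U = UC(C^{-1}+VA^{-1}U)$, so that $(A+UCV)A^{-1}U(C^{-1}+VA^{-1}U)^{-1}VA^{-1}$ collapses to $UCVA^{-1}$ once the inverse cancels its matching factor. Subtracting, the two copies of $UCVA^{-1}$ cancel and leave $(A+UCV)M = I$. Since $A+UCV$ is a nonsingular square matrix, a right inverse is the inverse, which completes the proof.

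I do not anticipate a genuine obstacle: this is a standard Sherman--Morrison--Woodbury verification, and the only step demanding care is justifying the invertibility of $C^{-1}+VA^{-1}U$ from the given hypotheses rather than assuming it, which the determinant identity via Schur complements above settles cleanly.
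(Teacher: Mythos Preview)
Your proposal is correct and complete. The paper itself does not prove this lemma; it states it as a standard result (the Sherman--Morrison--Woodbury identity) and uses it without proof. Your direct verification that $(A+UCV)M=I$ is the standard argument, and you in fact go further than the paper by justifying via the Schur-complement determinant identity that $C^{-1}+VA^{-1}U$ is nonsingular, a point the paper leaves implicit.
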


\begin{lem}
\label{lem: quadratic cost minimizer}
Let $A \in \BBR^{n \times n}$ be positive definite, let $b \in \BBR^n$ and $c \in \BBR$, and define $f\colon\BBR^n \rightarrow \BBR$ by $f(x) \triangleq x^\rmT A x + 2 b^\rmT x + c$. 
Then, $f$ has a unique stationary point, which is the global minimizer given by $\argmin_{x \in \BBR^n} f(x) = -A^{-1} b$.
\end{lem}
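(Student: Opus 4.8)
The plan is to locate the stationary point by a gradient computation and then certify that it is in fact the unique global minimizer by completing the square, which disposes of uniqueness and minimality simultaneously.

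First I would compute the gradient of $f$. Since $A$ is positive definite it is in particular symmetric, so $\nabla f(x) = 2Ax + 2b$. Setting $\nabla f(x) = 0$ yields $Ax = -b$. Because $A$ is positive definite it is nonsingular, so this linear system has the unique solution $x^\star = -A^{-1}b$. This already shows that $f$ has exactly one stationary point, namely $x^\star$.

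Next I would show that $x^\star$ is the global minimizer. Rather than invoke a second-order (Hessian) test, I would complete the square: using $Ax^\star = -b$ together with the symmetry of $A$ (which lets the two cross terms combine), one verifies the identity
\begin{align}
(x - x^\star)^\rmT A (x - x^\star) = x^\rmT A x + 2 b^\rmT x + (x^\star)^\rmT A x^\star, \nonumber
\end{align}
so that $f(x) = (x - x^\star)^\rmT A (x - x^\star) + \bigl(c - (x^\star)^\rmT A x^\star\bigr)$. The bracketed quantity is a constant independent of $x$. Since $A \succ 0$, the quadratic form $(x - x^\star)^\rmT A (x - x^\star)$ is nonnegative and vanishes if and only if $x = x^\star$. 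Hence $f(x) \ge f(x^\star)$ for all $x \in \BBR^n$, with equality only at $x^\star$, which proves that $\argmin_{x \in \BBR^n} f(x) = x^\star = -A^{-1}b$.

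There is no serious obstacle here; the only points requiring care are that positive definiteness of $A$ is used twice — once for nonsingularity, to solve $Ax = -b$, and once for the sign of the quadratic form, to conclude minimality — and that the symmetry of $A$ is precisely what makes the cross terms in the completion of the square add up to exactly $2 b^\rmT x$. Verifying the displayed identity is the sole computation, and it is a routine expansion combined with the substitution $Ax^\star = -b$.
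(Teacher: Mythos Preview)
Your proof is correct and complete. The paper actually states this lemma without proof, treating it as a standard fact; your argument via the gradient computation followed by completion of the square is a clean and fully rigorous justification, with the roles of symmetry and positive definiteness of $A$ identified precisely.
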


\begin{proof}[Proof of Theorem \ref{theo: KFRLS}]
We write $J_0(\hat{x})$, given by \eqref{eqn: KFRLS cost}, as
\begin{align*}
    & J_0(\hat{x}) = \Vert y_0 - C_0\hat{x} \Vert_{\Gamma_0^{-1}}^2 + \Vert A_0^{-1}(\hat{x} - B_0 u_0) - \hat{x}_0 \Vert_{P_0^{-1}-F_0}^2 , \\
    & = \Vert y_0 - C_0\hat{x} \Vert_{\Gamma_0^{-1}}^2 + \Vert \hat{x} - (A_0\hat{x}_0 + B_0 u_0) \Vert_{A_0^{-\rmT} (P_0^{-1}-F_0) A_0^{-1}}^2
\end{align*}
Next, we expand $J_0(\hat{x}) = \hat{x}^\rmT H_0 \hat{x} + 2 b_0^\rmT \hat{x} + c_0$, where
\begin{align}
    H_0 &\triangleq  C_0^\rmT \Gamma_0^{-1} C_0 + A_0^{-\rmT} (P_0^{-1} - F_0) A_0^{-1} \label{eqn: H0 defn} \\
    b_0 &\triangleq  -C_0^\rmT \Gamma_0^{-1} y_0 - A_0^{-\rmT} (P_0^{-1} - F_0) A_0^{-1} (A_0 \hat{x}_0 + B_0 u_0) \nonumber \\
    c_0 &\triangleq y_0^\rmT \Gamma_0^{-1} y_0 + \Vert A_0^{-1}(A_0\hat{x}_0 + B_0 u_0) \Vert _{P_0^{-1}-F_0}^2  \nonumber
\end{align}
Defining $P_1 \triangleq H_0^{-1}$, it follows from the expression of $H_0$ that \eqref{eqn: KFRLS Pinv Update} is satisfied for $k = 0$. 
Furthermore, if $k = 0$, \eqref{eqn: Pinv - F is pos def} simplifies to $F_0 \prec P_0^{-1}$ and hence $P_0^{-1} - F_0 \succ 0$. 
Furthermore, since $\Gamma_0$ is positive definite, it follows from definition \eqref{eqn: H0 defn} that $H_0$ is positive definite. 
Therefore, from Lemma \ref{lem: quadratic cost minimizer}, it follows that the unique minimizer $\hat{x}_1$ of $J_0$ is given by $\hat{x}_1 = - H_0^{-1} b_0$, which can be expanded as
\begin{align*}
    \hat{x}_1 =& P_1 \big[ C_0^\rmT \Gamma_0^{-1} y_0 + A_0^{-\rmT} (P_0^{-1} - F_0) A_0^{-1} (A_0 \hat{x}_0 + B_0 u_0) \big] \\
    =& P_1 \big[ C_0^\rmT \Gamma_0^{-1} y_0  -  C_0^\rmT \Gamma_0^{-1} C_0 (A_0 \hat{x}_0 + B_0 u_0) 
    \\
    & + (A_0^{-\rmT} (P_0^{-1} - F_0) A_0^{-1} + C_0^\rmT \Gamma_0^{-1} C_0) (A_0 \hat{x}_0 + B_0 u_0)\big] \\
    =& P_1 \big[ C_0^\rmT \Gamma_0^{-1} y_0 -  C_0^\rmT \Gamma_0^{-1} C_0 (A_0 \hat{x}_0 + B_0 u_0) 
    \\
    & + P_1^{-1} (A_0 \hat{x}_0 + B_0 u_0)\big] 
    \\
    =& A_0 \hat{x}_0 + B_0 u_0 + P_1 C_0^\rmT \Gamma_0^{-1}(y_0 - C_0 (A_0 \hat{x}_0 + B_0 u_0)).
\end{align*}
Hence, \eqref{eqn: KFRLS x Update} holds for $k = 0$. Moreover, since $H_0$ is positive definite, it follows that $P_1 = H_0^{-1}$ is also positive definite.

Now, let $k \ge 1$. 
%
%
Note that for all $0 \le i \le k$,
\begin{align*}
    (\mathcal{T}_{i,k+1}(\hat{x}) - \hat{x}_i)
    &= \Phi_{i,k+1}(\hat{x} - \mathcal{B}_{k+1,i} \mathcal{U}_{k+1,i}) - \hat{x}_i 
    \\
    &= \Phi_{i,k+1}(\hat{x} - (\Phi_{k+1,i} \hat{x}_i + \mathcal{B}_{k+1,i} \mathcal{U}_{k+1,i})) 
    \\
    &= \Phi_{i,k+1}(\hat{x} - \mathcal{T}_{k+1,i}(\hat{x}_i)).
\end{align*}
Hence, $J_k(\hat{x})$, given by \eqref{eqn: KFRLS cost}, can be written as
%
%
\begin{align*}
    & J_k(\hat{x}) = \Vert \Phi_{0,k+1} (\hat{x} - \mathcal{T}_{k+1,0}(\hat{x}_0)) \Vert_{P_0^{-1}}^2 + 
    \\
    & \sum_{i=0}^{k}  \Vert y_i - C_i \mathcal{T}_{i+1,k+1}(\hat{x}) \Vert _{\Gamma_i^{-1}}^2 
    -  \Vert \Phi_{i,k+1} ( \hat{x} - \mathcal{T}_{k+1,i}(\hat{x}_i) ) \Vert_{F_i} .
\end{align*}
Next, we expand $J_k(\hat{x}) = \hat{x}^\rmT H_k \hat{x} + 2 b_k^\rmT \hat{x} + c_k$, where
\begin{align}
    & H_k \triangleq \sum_{i=0}^k \Big[ \Phi_{i+1,k+1}^\rmT C_i^\rmT \Gamma_i^{-1} C_i \Phi_{i+1,k+1} 
    - \Phi_{i,k+1}^\rmT F_i \Phi_{i,k+1} \Big] \nonumber
    \\
    & \hspace{10pt} + \Phi_{0,k+1}^\rmT P_0^{-1} \Phi_{0,k+1}, \label{eqn: H_k defn} \\
    & b_k \triangleq \sum_{i=0}^k \Big[  \Phi_{i,k+1}^\rmT F_i \Phi_{i,k+1} \mathcal{T}_{k+1,i}(\hat{x}_i) \nonumber
    \\
    & \hspace{10pt} -\Phi_{i+1,k+1}^\rmT C_i^\rmT \Gamma_i^{-1} (y_i + C_i \Phi_{i+1,k+1} \mathcal{B}_{k+1,i+1}  \mathcal{U}_{k+1,i+1}) \Big]  \nonumber \\
    & \hspace{10pt} - \Phi_{0,k+1}^\rmT P_0^{-1} \Phi_{0,k+1} \mathcal{T}_{k+1,0}(\hat{x}_0), \label{eqn: b_k defn} \\
    & c_k \triangleq  \sum_{i=0}^k \Big[ \Vert y_i + C_i \Phi_{i+1,k+1} \mathcal{B}_{k+1,i+1} \mathcal{U}_{k+1,i+1} \Vert_{\Gamma_i}^2 \nonumber
    \\
    & \hspace{10pt}  - \Vert \Phi_{i,k+1} \mathcal{T}_{k+1,i}(\hat{x}_i) \Vert_{ F_i }^2 \Big] \nonumber
    %
    +  \Vert \Phi_{0,k+1} \mathcal{T}_{k+1,0}(\hat{x}_i) \Vert_{P_0^{-1}}^2. \nonumber
\end{align}
Note that $H_k$ can be written recursively as
\begin{align}
    H_k =& \sum_{i=0}^{k-1} \left[ \Phi_{i+1,k+1}^\rmT C_i^\rmT \Gamma_i^{-1} C_i \Phi_{i+1,k+1} 
    - \Phi_{i,k+1}^\rmT F_i \Phi_{i,k+1} \right] \nonumber
    \\
    & + \Phi_{0,k+1}^\rmT P_0^{-1} \Phi_{0,k+1}  
    + C_k^\rmT \Gamma_k^{-1} C_k - A_k^{-\rmT} F_k A^{-1}  \nonumber
    \\
    =& A_k^{-\rmT} \Big[ \sum_{i=0}^{k-1} \left( \Phi_{i+1,k}^\rmT C_i^\rmT \Gamma_i^{-1} C_i \Phi_{i+1,k} 
    - \Phi_{i,k}^\rmT F_i \Phi_{i,k} \right) \nonumber
    \\
    & + \Phi_{0,k}^\rmT P_0^{-1} \Phi_{0,k} \Big] A_k^{-1}
    + C_k^\rmT \Gamma_k^{-1} C_k - A_k^{-\rmT} F_k A^{-1} \nonumber
    \\
    =& A_k^{-\rmT} (H_{k-1} - F_k) A_k^{-1} + C_k^\rmT \Gamma_k^{-1} C_k . \label{eqn: Hk update}
\end{align}
Defining $P_{k+1} \triangleq H_k^{-1}$,
it follows that \eqref{eqn: KFRLS Pinv Update} is satisfied. 

Next, to write a recursive update for $b_k$, we first write $b_k$ as $b_k = b_{k,1} + b_{k,2} + b_{k,3}$, where
\begin{align*}
    & b_{k,1} \triangleq \sum_{i=0}^{k-1} \Big[  \Phi_{i,k+1}^\rmT F_i \Phi_{i,k+1} \mathcal{T}_{k+1,i}(\hat{x}_i) \nonumber
    \\
    & \hspace{10pt} -\Phi_{i+1,k+1}^\rmT C_i^\rmT \Gamma_i^{-1} (y_i + C_i \Phi_{i+1,k+1} \mathcal{B}_{k+1,i+1}  \mathcal{U}_{k+1,i+1}) \Big],
    \\
    & b_{k,2} \triangleq - \Phi_{0,k+1}^\rmT P_0^{-1} \Phi_{0,k+1} \mathcal{T}_{k+1,0}(\hat{x}_0) 
    \\
    & b_{k,3} \triangleq A_k^{-\rmT} F_k A_k^{-1} (A_k \hat{x}_k + B_k u_k) - C_k^\rmT \Gamma_k^{-1} y_k.
\end{align*}
Note that $b_{k,1}$ is the sum of first $k-1$ terms of summation in \eqref{eqn: b_k defn}, $b_{k,3}$ is the last term of summation in \eqref{eqn: b_k defn}, and $b_{k,2}$ is the remaining term outside the summation. Next, note, for all $0 \le i \le k$, the identities
\begin{align}
    & \Phi_{i,k+1} \mathcal{T}_{k+1,i}(\hat{x}_i) 
    = \Phi_{i,k} \mathcal{T}_{k,i}(\hat{x}_i) + \Phi_{i,k+1} B_k u_k, \label{eqn: kfrls proof identity 1} \\
    & \Phi_{i+1,k+1} \mathcal{B}_{k+1,i+1}  \mathcal{U}_{k+1,i+1} \nonumber
    \\
    & \hspace{30pt} = \Phi_{i+1,k} \mathcal{B}_{k,i+1} \mathcal{U}_{k,i+1} + \Phi_{i+1,k+1} B_k u_k. \label{eqn: kfrls proof identity 2}
\end{align}
Using \eqref{eqn: kfrls proof identity 1} and \eqref{eqn: kfrls proof identity 2}, we can write $b_{k,1} = b_{k,1,1} + b_{k,1,2}$, where
\begin{align*}
    b_{k,1,1} & \triangleq  \sum_{i=0}^{k-1} \big[ 
    \Phi_{i,k+1}^\rmT F_i \Phi_{i,k} \mathcal{T}_{k,i}(\hat{x}_i) 
    \\
    & \hspace{20pt} -\Phi_{i+1,k+1}^\rmT C_i^\rmT \Gamma_i^{-1} (y_i + C_i \Phi_{i+1,k} \mathcal{B}_{k,i+1} \mathcal{U}_{k,i+1}) \big],
    \\
    b_{k,1,2} &\triangleq  \sum_{i=0}^{k-1} \big[  \Phi_{i,k+1}^\rmT F_i \Phi_{i,k+1} B_k u_k
    \\
    & \hspace{20pt} -\Phi_{i+1,k+1}^\rmT C_i^\rmT \Gamma_i^{-1} C_i \Phi_{i+1,k+1} B_k u_k   \big].
\end{align*}
Similarly, \eqref{eqn: kfrls proof identity 1} implies that $b_{k,2} = b_{k,2,1} + b_{k,2,2}$, where
\begin{align*}
    b_{k,2,1} \triangleq   -\Phi_{0,k+1}^\rmT P_0^{-1} \Phi_{0,k} \mathcal{T}_{k,0}(\hat{x}_0) ,
    \\
    b_{k,2,2} \triangleq  - \Phi_{0,k+1}^\rmT P_0^{-1} \Phi_{0,k+1} B_k u_k  .
\end{align*}
It then follows from \eqref{eqn: b_k defn} and \eqref{eqn: H_k defn}, respectively, that
\begin{align*}
    b_{k,1,1} + b_{k,2,1} &= A_k^{-\rmT} b_{k-1}, \\
    b_{k,1,2} + b_{k,2,2} &= - A_k^{-\rmT} H_{k-1} A_k^{-1} B_k u_k .
\end{align*}
Hence, we obtain the recursive update
\begin{align*}
    b_k =& A_k^{-\rmT} b_{k-1} - A_k^{-\rmT} H_{k-1} A_k^{-1} B_k u_k 
    \\
    &+ A_k^{-\rmT} F_k A_k^{-1} (A_k \hat{x}_k + B_k u_k) - C_k^\rmT \Gamma_k^{-1} y_k.
\end{align*}

Finally, note that \eqref{eqn: H_k defn} can be used to rewrite \eqref{eqn: Pinv - F is pos def} as $F_k \prec H_{k-1}$, and hence $H_{k-1} - F_k \succ 0$.
Furthermore, since $\Gamma_k$ is positive definite, it follows from \eqref{eqn: Hk update} that $H_k$ is positive definite. 
Therefore, by Lemma \ref{lem: quadratic cost minimizer}, the unique minimizer $\hat{x}_{k+1}$ of $J_k$ is given by $\hat{x}_{k+1} = - H_k^{-1} b_k$, which simplifies to
\begin{align*}
    & \hat{x}_{k+1} = - H_k^{-1} b_k = -P_{k+1} b_k 
    \\
    &= P_{k+1} \big[ -A_k^{-\rmT} b_{k-1} + A_k^{-\rmT} H_{k-1} A_k^{-1} B_k u_k 
    \\
    & \hspace{40pt} -A_k^{-\rmT} F_k A_k^{-1} (A_k \hat{x}_k + B_k u_k) + C_k^\rmT \Gamma_k^{-1} y_k \big]  
    \\
    &= P_{k+1} \big[A_k^{-\rmT} P_k^{-1} \hat{x}_k + A_k^{-\rmT} P_{k}^{-1} A_k^{-1} B_k u_k 
    \\
    & \hspace{40pt} - A_k^{-\rmT} F_k A_k^{-1} (A_k \hat{x}_k + B_k u_k) + C_k^\rmT \Gamma_k^{-1} y_k \big] 
    \\
    &= P_{k+1} \big[ A_k^{-\rmT} (P_k^{-1} - F_k) A_k^{-1} (A_k \hat{x}_k + B_k u_k) + C_k^\rmT \Gamma_k^{-1} y_k \big] 
    \\
    &= P_{k+1} \big[ C_k^\rmT \Gamma_k^{-1} y_k - C_k^\rmT \Gamma_k^{-1} C_k(A_k \hat{x}_k + B_k u_k)
    \\
    & \hspace{10pt} + \left(A_k^{-\rmT} (P_k^{-1} - F_k) A_k^{-1} + C_k^\rmT \Gamma_k^{-1} C_k \right)(A_k \hat{x}_k + B_k u_k) \big] 
    \\
    &= P_{k+1} \big[C_k^\rmT \Gamma_k^{-1} (y_k - C_k(A_k \hat{x}_k + B_k u_k))
    \\
    & \hspace{40pt} + P_{k+1}^{-1}(A_k \hat{x}_k + B_k u_k)  \big] 
    \\
    &= A_k \hat{x}_k + B_k u_k + P_{k+1} C_k^\rmT \Gamma_k^{-1} (y_k - C_k(A_k \hat{x}_k + B_k u_k) ).
\end{align*}
Hence, \eqref{eqn: KFRLS x Update} is satisfied. Finally, since $H_k$ is positive definite, it follows that $P_{k+1} = H_k^{-1}$ is also positive definite.
\end{proof}

\endgroup

\end{document}